\newcommand{\beq}{\begin{equation}}
\newcommand{\eeq}{\end{equation}}
\newcommand{\dif}[2]{\frac{{\rm d} #1}{{\rm d} #2}}
\newcommand{\ildif}[2]{{\rm d} #1/{{\rm d} #2 }}
\newcommand{\ilpdif}[2]{\partial #1/{\partial #2 }}
\newcommand{\pdif}[2]{\frac{\partial #1}{\partial #2}}
\newcommand{\ilpddif}[3]{\partial^2 #1/{\partial #2 \partial #3}}
\newcommand{\defn}{\begin{quote}{\bf Definition. }}
\newcommand{\edefn}{\end{quote}}
\newcommand{\thm}{\begin{theorem}}
\newcommand{\ethm}{\end{theorem}}
\newcommand{\bmat}[1]{\left ( \begin{array}{#1}}
\newcommand{\emat}{\end{array}\right )}
\newcommand{\E}{\mathbb{E}}
\newcommand{\ts}{^{\sf T}} 
\newcommand{\its}{^{\sf -T}}
\newcommand{\bp}{{\bm \beta}}
\newcommand{\vm}{\bm}
\newcommand{\vf}{\bf}
\theoremstyle{definition}
\theoremstyle{plain}
\newtheorem{theorem}{Theorem}
\newcommand{\eps}[3]
{{\begin{center}
 \rotatebox{#1}{\scalebox{#2}{\includegraphics{#3}}}
 \end{center}}
}
\newcommand{\dsp}{1} 
\renewcommand{\baselinestretch}{\dsp}
\begin{document}
\renewcommand{\baselinestretch}{1}
\title{A generalized Fellner-Schall method for smoothing parameter estimation with application to Tweedie location, scale and shape models}
\author{ Simon N. Wood and Matteo Fasiolo\\ School of Mathematics, University of Bristol, Bristol, U.K.\\
{\tt simon.wood@bath.edu}}

\maketitle

\begin{abstract}
We consider the estimation of smoothing parameters and variance components in models with  a regular log likelihood subject to quadratic penalization of the model coefficients, via a generalization of the method of  \cite{fellner1986} and \cite{schall1991}. In particular: (i) we generalize the original method to the case of penalties that are linear in several smoothing parameters, thereby covering the important cases of tensor product and adaptive smoothers; (ii) we show why the method's steps increase the restricted marginal likelihood of the model, that it tends to converge faster than the EM algorithm, or obvious accelerations of this, and investigate its relation to Newton optimization; (iii) we generalize the method to any Fisher regular likelihood. The method represents a considerable simplification over existing methods of estimating smoothing parameters in the context of regular likelihoods, without sacrificing generality: for example, it is only necessary to compute with the same  first and second derivatives of the log-likelihood required for coefficient estimation, and not with the third or fourth order derivatives required by alternative approaches. Examples are provided  which would have been impossible or impractical with pre-existing Fellner-Schall methods, along with an example of a Tweedie location, scale and shape model which would be a challenge for alternative methods. 
\end{abstract}

\renewcommand{\baselinestretch}{\dsp}

\section{Introduction}

This paper is about a very simple method for estimating the smoothing parameters and certain other variance parameters of models with a regular log likelihood, subject to quadratic penalization. The method generalizes the method of \cite{fellner1986} and \cite{schall1991}, by extending the range of smooth model terms with which it can deal, and generalizing beyond the GLM setting to models with any Fisher regular likelihood. The advantage of the Fellner-Schall method is that it offers a simple explicit formula by which smoothing and variance parameters can be iteratively updated using essentially the same quantities anyway required in order to estimate the model coefficients. This has led to its use with smooth additive models, by \cite{rigby2013automatic} amongst others. However the original method has some disadvantages. Firstly it lacks generality, applying only to smooth terms each having a single smoothing parameter, so that tensor product smooth interactions and adaptive smoothers can not be employed. \cite{rodriguez2015sap} partially remove this restriction for some tensor product smooths, but what we propose here is both simpler and more general. Secondly the original method only applies to GLM type likelihoods, with application beyond that setting relying on treating linearized approximations as Gaussian. Again what we propose is simpler and more general. Thirdly the original method derivations, while plausible, do not prove that the method increases the model restricted likelihood at each step, nor offer any insight into convergence rates. We address these issues. In short, it was possible to object that Fellner-Schall methods for updating smoothing parameters were somewhat ad-hoc and insufficiently general. This paper largely removes these objections.

In part, we were motivated to undertake this work by problems in fisheries stock assessment. For example, Figure \ref{mack-space.fig}a shows data from a 2010 survey for mackerel eggs off the coast of western Europe. Such surveys are undertaken in order to help estimate the mass of spawning adults that must be present, and generalized additive models provide suitable spatial models for the mean egg density. As with most fisheries data, the egg counts tend to be highly over-dispersed relative to a Poisson distribution, and a Tweedie distribution \citep{tweedie1984} based model typically offers a much better fit: the variance of a Tweedie random variable $y_i$, with mean $\mu_i$, is given by $\text{var}(y_i) = \phi \mu_i^p$ where $\phi$ and $p$ (here $1<p<2$) are parameters. An important biological feature is that Mackerel are known to favour spawning grounds close to the continental shelf edge, for which the 200m depth contour offers a reasonable proxy. However if mackerel are responding to sea depth, there is no good reason to suppose that this response leads only to a change in the mean density of eggs in the water column: other aspects of the distribution shape are also likely to be effected, and a reasonable model would allow the parameters $p$ and $\phi$ to vary smoothly as sea depth varies. 

\begin{figure}
\vspace*{-.6cm }

\eps{-90}{.6}{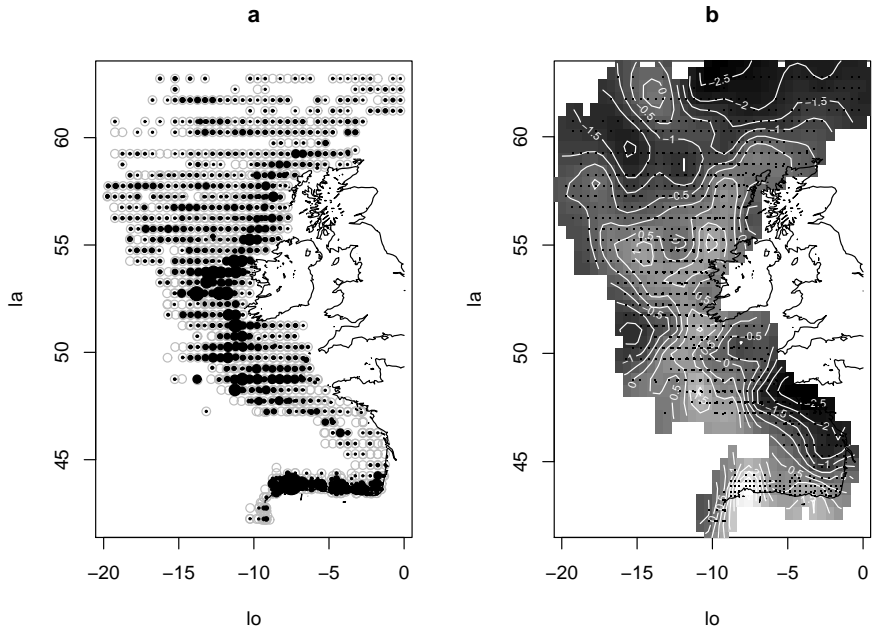}
\vspace*{-.5cm}

\caption{\small {\bf a}. Mackerel ({\em Scomber scombrus}) egg data from the 2010 survey. Grey circles are survey locations, black circles are proportional to the 4th root of egg count. {\bf b}. Image and contour plot of the spatial effect from the Tweedie location scale and shape model described in section \ref{mack.sec}. \label{mack-space.fig}}
\end{figure}

In principle such a model would lie in the GAMLSS class of \cite{rigby2005} and smoothing parameters could be estimated by the method of \cite{wood2015plig}. However, as yet there is no publicly available software for estimating a Tweedie location scale and shape model. The problem is that the Tweedie density does not have an explicit form. Rather, it involves a normalizing constant which is a function of $p$ and $\mu$ and is computable by summing an infinite series `from the middle'. \cite{dunn.smyth2005} provide the details, while \cite{wood2015plig} show how to obtain first and second derivatives of the log density with respect to $p$ and $\mu$: considerable care has to be taken to ensure that the computations maintain numerical stability. The smoothing parameter estimation methods of \cite{wood2015plig} would require third and fourth derivatives of the log Tweedie density, and as yet there are no published methods for stable evaluation of these. Hence it would be useful to have a smoothing parameter estimation method that is general enough to encompass a Tweedie location scale and shape model, while avoiding the need for higher derivatives of the log density.   
 
To introduce the smoothing parameter estimation problem in more detail, first consider the simple case of a Gaussian additive model for a univariate response variable 
\beq
y_i = {\vf A}_i {\vm \theta} + \sum_j g_j(x_{ji}) + \epsilon_i \label{am} 
\eeq
where ${\vf A}_i$ is the $i^{\rm th}$ row of a parametric model matrix, $\vm \theta$ is a vector of unknown coefficients, $g_j$ is a smooth function of (possibly multivariate) covariate $x_j$, and the $\epsilon_i$ are independent $N(0,\sigma^2)$ random deviates. The $g_j$ can be represented using reduced rank spline bases, with associated quadratic penalties penalizing departure from smoothness during fitting. For example $g_j(x) = \sum_k b_k(x) \gamma_k$, where the $b_k$ are spline basis functions and the $\gamma_k$ are coefficients: the associated smoothing penalty is then $\lambda_j {\vm \gamma} \ts {\cal S}_j {\vm \gamma}$, where ${\cal S}_j$ is a fixed matrix, and is usually rank deficient because some functions are treated as `completely smooth'. $\lambda_j$ is a smoothing parameter controlling the strength of penalization during fitting. In general each $g_j$ may have several penalties.  

It is now well established \citep[e.g][]{kimeldorfwahba1970, silverman85, ruppert.wand.carroll} that the smoothing penalties can be viewed as being induced by improper Gaussian prior distributions on the spline coefficients, in which case (\ref{am}) can be re-written as a linear mixed effects model: 
\beq
{\vf y} = {\vf X} {\vm \beta} + {\vm \epsilon}, \text{ } {\vm \beta} \sim N ({\vf 0},{\vf S}_\lambda^-\sigma^2)\text{ and }
{\vm \epsilon} \sim N({\vf 0},{\vf I} \sigma^2), \label{mixed.model}
\eeq
where $\sigma^2$ and $\vm \lambda$ are parameters, $\vm \beta$ is a coefficient vector containing $\vm \theta$ and the coefficients for each smooth term, and $\vf X$ is an $n \times p$ model matrix, containing $\vf A$ and the evaluated basis functions of the smooth terms. ${\vf S}_\lambda$ is a positive semi-definite precision matrix, with Moore-Penrose pseudoinverse ${\vf S}_\lambda^-$. Let ${\vf S}_j$ be ${\cal S}_j$ padded out with zeroes, so that ${\vm \beta} \ts {\vf S}_j {\vm \beta} = {\vm \gamma} \ts {\cal S}_j {\vm \gamma}$, where $\vm \gamma $ is the coefficient vector for $g_j$. Then ${\vf S}_\lambda = \sum_j \lambda_j {\vf S}_j$ (some $g_j$ may each be penalized by several terms in this summation).  The null space of ${\vf S}_\lambda$ is interpretable as the  space of model fixed effects, whereas the range space is the space of random effects. Obviously other simple Gaussian random effect terms can be included in the model in addition to smooth functions. 

\cite{fellner1986} developed a simple iteration for updating $\vm \lambda$ in order to maximize the restricted marginal likelihood of (\ref{mixed.model}), for the special case in which ${\vf S}_\lambda = \sum_j \lambda_j \mathbb{I}_j$, the $\mathbb{I}_j$ being identity matrices with most of their diagonal entries zeroed, and no non-zero entries in common between different $\mathbb{I}_j$. \cite{schall1991} extended this to generalized linear mixed models. Here we first give a simple generalization of the Fellner-Schall method that applies to any model with the structure (\ref{mixed.model}), including smooth additive models in which the smoother terms each have multiple smoothing parameters. We also show why the method improves the restricted marginal likelihood at each step, which is something not revealed by the conventional derivations of the original method. In the additive Gaussian setting our main result is the update formula
$$
\lambda_j^* = \sigma^2\frac{\text{tr}({{\vf S}_\lambda^-{\vf S}_j}) -
\text{tr}\{({\vf X}\ts {\vf X} + {\vf S}_\lambda)^{-1} {\vf S}_j\}}{\hat {\vm \beta} \ts {\vf S}_j \hat {\vm \beta}} \lambda_j.
$$
We also consider updates in the case of any model giving rise to a regular likelihood, but with the previously described prior distribution structure on $\vm \beta$, resulting in the general update (\ref{general-update}) in section \ref{sec.general}: generalized linear mixed models are a special case. In practice the update formula is iteratively alternated with evaluation of $\hat {\vm \beta}$, given the current $\vm \lambda$ estimates. 

The rest of the paper is structured as follows. We first consider the case of Gaussian additive models, deriving a Fellner-Schall type update that can deal with terms with multiple smoothing parameters using a derivation that shows, by construction, that the update must increase the model restricted marginal likelihood of the model. We then study the method in the context of updating one smoothing parameter from a model with several smoothing parameters, showing that it takes longer steps than the EM algorithm, or the most obvious acceleration of the EM algorithm, while not overshooting the maximum of the restricted marginal likelihood, at least in the large sample limit. The update is then generalized to the case of any Fisher-regular likelihood, at the cost of a large sample approximation borrowed from the PQL method. Finally, we present two simple examples which were not possible with previous Fellner-Schall methods, before returning to the Tweedie location scale and shape model for the Mackerel data.

\section{Why the modified update works \label{sec-gaussian}}

For model (\ref{mixed.model}), the improper log joint density of the data, $\vf y$, and coefficients, $\vm \beta$, can be written as 
$$
\log f_\lambda ({\vf y},{\vm \beta}) = - \frac{\|{\vf y}-{\vf X} {\vm \beta}\|^2 + {\vm \beta} \ts {\vf S}_\lambda {\vm \beta}}{2 \sigma^2} + \log |{\vf S}_\lambda/\sigma^2|_+  /2 + c  
$$
where $|{\vf S}_\lambda|_+$ denotes the product of the non-zero eigenvalues of ${\vf S}_\lambda$ and we use $c$ to denote a parameter independent constant, which may vary from expression to expression. Following \cite{wood2011} the log restricted marginal likelihood can conveniently be written as 
$$
l_r({\vm \lambda}) = -\frac{\|{\vf y}-{\vf X} \hat {\vm \beta}_\lambda\|^2 + \hat {\vm \beta}_\lambda \ts {\vf S}_\lambda \hat {\vm \beta}_\lambda}{2 \sigma^2} + \log |{\vf S}_\lambda/\sigma^2|_+/2 - \log |{\vf X}\ts {\vf X}/\sigma^2 + {\vf  S}_\lambda/\sigma^2 |/2  + c
$$
where $\hat {\bm \beta}_\lambda = \text{argmax} f_\lambda ({\vf y},{\vm \beta})$ for a given $\vm \lambda$. Expressing the joint density and $l_r$ in this way is the key to straightforwardly obtaining a general update formula. Given that $\ilpdif{(\|{\vf y}-{\vf X} {\vm \beta}\|^2 + {\vm \beta} \ts {\vf S}_\lambda {\vm \beta})}{\vm \beta}|_{\hat \beta_\lambda} = {\vf 0}$, by definition of $\hat {\vm \beta}_\lambda$, we have
$$
\pdif{l_r}{\lambda_j} = \text{tr}({{\vf S}_\lambda^-{\vf S}_j})/2 -
\text{tr}\{({\vf X}\ts {\vf X} + {\vf S}_\lambda)^{-1} {\vf S}_j\}/2 -
\hat {\bm \beta}_{\lambda} \ts{\vf S}_j \hat {\bm \beta}_{\lambda} /(2\sigma^2).
$$
If we were to follow the conventional derivation of the Fellner-Schall method, we would now multiply all terms in $\ilpdif{l_r}{\lambda_j}$ by $\lambda_j$, and then decide to treat two of these $\lambda_j$ as fixed at their previous estimate, while one is to be updated. Equating $\ilpdif{l_r}{\lambda_j}$ to zero and re-arranging then gives the update equation. Such an approach does not reveal why the update increases $l_r$, so we instead give an alternative derivation.

By Theorem 1, below, $\text{tr}({{\vf S}_\lambda^-{\vf S}_j}) -
\text{tr}\{({\vf X}\ts {\vf X} + {\vf S}_\lambda)^{-1} {\vf S}_j\}$ is non-negative, while $\hat {\bm \beta}_{\lambda} \ts {\vf S}_j \hat {\vm \beta}_{\lambda}$ is non-negative by the positive semi-definiteness of ${\vf S}_j$. Hence $\ilpdif{l_r}{\lambda_j}$ will be negative if 
$$
\text{tr}({{\vf S}_\lambda^-{\vf S}_j}) -
\text{tr}\{({\vf X}\ts {\vf X} + {\vf S}_\lambda)^{-1} {\vf S}_j\} < \hat {\vm \beta}_\lambda \ts {\vf S}_j \hat {\vm \beta}_{\lambda}/\sigma^2,
$$
indicating that $\lambda_j$ should be decreased. If the inequality is reversed then $\ilpdif{l_r}{\lambda_j}$ is positive, indicating that $\lambda_j$ should be increased. If the inequality becomes an equality then $\ilpdif{l_r}{\lambda_j}=0$ and $\lambda_j$ should not be changed. A final requirement of any update is that $\lambda_j$ should remain positive. A simple update that clearly meets all four requirements is  
\beq
\lambda_j^* = \sigma^2\frac{\text{tr}({{\vf S}_\lambda^-{\vf S}_j}) -
\text{tr}\{({\vf X}\ts {\vf X} + {\vf S}_\lambda)^{-1} {\vf S}_j\}}{\hat {\vm \beta}_\lambda \ts {\vf S}_j \hat {\vm \beta}_\lambda} \lambda_j, \label{g-schall.update}
\eeq
with $\lambda_j^*$ set to some pre-defined upper limit if $\hat {\vm \beta}_\lambda \ts {\vf S}_j \hat {\vm \beta}_\lambda$ is so close to zero that the limit would otherwise be exceeded. Formally ${\vm \Delta} = {\vm \lambda}^* - {\vm \lambda}$ is an ascent direction for $l_r$, by Taylor's theorem and the fact that 
${\vf \Delta} \ts \ilpdif{l_r}{{\vm \lambda}} > 0$,  unless $\vm \lambda$ is already a turning point of $l_r$. To formally guarantee that the update increases $l_r$ requires step length control, for example we use the update ${\vm \delta} = {\vm \Delta}/2^k$, where $k$ is the smallest integer $\ge 0$ such that $l_r({\vm \lambda} + {\vm \delta}) > l_r({\vm \lambda})$.

Two terms in the update have the potential to be of $O(p^3)$ floating point cost, but $\text{tr}\{({\vf X}\ts {\vf X} + {\vf S}_\lambda)^{-1} {\vf S}_j\}$ can re-use the Cholesky factor of ${\vf X}\ts {\vf X} + {\vf S}_\lambda$, which is anyway required to estimate $\hat {\bm \beta}_\lambda$, while the block diagonal nature of ${\vf S}_\lambda$ means that in reality 
$\text{tr}({{\vf S}_\lambda^-{\vf S}_j})$ has $O(q_j^3)$ computational cost, where $q_j$ is the number of coefficients affected by ${\vf S}_j$ and is typically far fewer than $p$. Under the conditions of the original Fellner-Schall proposal, then $\text{tr}({{\vf S}_\lambda^-{\vf S}_j}) = \text{rank}({\vf S}_j)/\lambda$ and we recover exactly the Fellner-Schall update, albeit with a slightly more computationally tractable expression. The update relies on the following, which is the key to the generalization beyond singly penalized smooth terms.

\begin{theorem} Let $\vf B$ be a positive definite matrix and ${\vf S}_\lambda$ be a positive semi-definite matrix parameterized by $\vm \lambda$, and with a null space that is independent of the value of $\vm \lambda$. Let positive semi-definite matrix ${\vf S}_j$ denote the derivative of ${\vf S}_\lambda$ with respect to $\lambda_j$. Then
 $\text{tr}({{\vf S}_\lambda^-{\vf S}_j}) -
\text{tr}\{({\vf B} + {\vf S}_\lambda)^{-1} {\vf S}_j\} > 0$.
\end{theorem}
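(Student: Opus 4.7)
The natural approach is to reduce the problem to the range space of ${\vf S}_\lambda$, where everything is positive definite, and then conclude by Loewner-order monotonicity of matrix inversion. The hypothesis that the null space $\mathcal N$ of ${\vf S}_\lambda$ is independent of ${\vm \lambda}$ is what makes this reduction work: differentiating ${\vf S}_\lambda {\vf v} = {\vf 0}$ in $\lambda_j$ for any fixed ${\vf v}\in\mathcal N$ shows ${\vf S}_j {\vf v} = {\vf 0}$, so $\mathcal N$ also lies in the null space of ${\vf S}_j$. Hence both matrices can be simultaneously block-diagonalized by a single orthogonal ${\vf U} = [{\vf U}_1, {\vf U}_0]$ whose columns span $\mathcal R = \mathcal N^\perp$ and $\mathcal N$ respectively, with ${\vf U}$ independent of ${\vm \lambda}$.

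Writing $\tilde{\vf S}_\lambda = {\vf U}_1\ts {\vf S}_\lambda {\vf U}_1 \succ {\vf 0}$ and $\tilde{\vf S}_j = {\vf U}_1\ts {\vf S}_j {\vf U}_1 \succeq {\vf 0}$ in this basis, one has ${\vf S}_\lambda^- = {\vf U}_1 \tilde{\vf S}_\lambda^{-1} {\vf U}_1\ts$ and therefore $\text{tr}({\vf S}_\lambda^- {\vf S}_j) = \text{tr}(\tilde{\vf S}_\lambda^{-1} \tilde{\vf S}_j)$. For the second trace, I would partition ${\vf U}\ts {\vf B} {\vf U} = \bigl(\begin{smallmatrix} {\vf T}_{11} & {\vf T}_{12} \\ {\vf T}_{12}\ts & {\vf T}_{22} \end{smallmatrix}\bigr)$ conformably and apply the block inversion formula to ${\vf U}\ts({\vf B} + {\vf S}_\lambda){\vf U}$; its $(1,1)$ block is $({\vf C} + \tilde{\vf S}_\lambda)^{-1}$, where ${\vf C} = {\vf T}_{11} - {\vf T}_{12} {\vf T}_{22}^{-1} {\vf T}_{12}\ts$ is the Schur complement of ${\vf T}_{22}$, positive definite because ${\vf B}$ is. Since ${\vf U}\ts {\vf S}_j {\vf U}$ vanishes outside its $(1,1)$ block, only this block of the inverse contributes, giving $\text{tr}\{({\vf B} + {\vf S}_\lambda)^{-1} {\vf S}_j\} = \text{tr}\{({\vf C} + \tilde{\vf S}_\lambda)^{-1} \tilde{\vf S}_j\}$.

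The difference of the two traces is then $\text{tr}\{[\tilde{\vf S}_\lambda^{-1} - ({\vf C} + \tilde{\vf S}_\lambda)^{-1}] \tilde{\vf S}_j\}$. Because ${\vf C} \succ {\vf 0}$ implies $\tilde{\vf S}_\lambda \prec {\vf C} + \tilde{\vf S}_\lambda$, Loewner-order monotonicity of matrix inversion gives $\tilde{\vf S}_\lambda^{-1} \succ ({\vf C} + \tilde{\vf S}_\lambda)^{-1}$, so the bracketed factor is positive definite; its trace against a nonzero positive semi-definite matrix is strictly positive (the only excluded case is the trivial ${\vf S}_j = {\vf 0}$). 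The one technical subtlety, and the reason one cannot simply restrict ${\vf B}$ to $\mathcal R$, is the off-diagonal coupling of ${\vf B}$ across the decomposition $\mathcal R \oplus \mathcal N$; the Schur complement step is exactly what produces a genuine positive definite ${\vf C}$ playing the role that ${\vf B}$ would play if the problem were posed purely on $\mathcal R$.
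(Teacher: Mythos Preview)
Your argument is correct and takes a genuinely different route from the paper. The paper first whitens ${\vf B}$ to the identity via the congruence ${\vm \Lambda}^{-1/2}{\vf U}\ts(\,\cdot\,){\vf U}{\vm \Lambda}^{-1/2}$ (where ${\vf B}={\vf U}{\vm \Lambda}{\vf U}\ts$), then diagonalizes the transformed ${\vf S}_\lambda$ and reduces the inequality to a termwise comparison of scalars, $\sum_{i\in M} s_i/D_{ii}$ versus $\sum_{i\in M} s_i/(D_{ii}+1)$. You instead restrict first to the range space of ${\vf S}_\lambda$ via the orthogonal splitting ${\vf U}=[{\vf U}_1,{\vf U}_0]$, use the Schur complement ${\vf C}$ to correctly account for the off-diagonal coupling of ${\vf B}$ across $\mathcal R\oplus\mathcal N$, and then conclude abstractly from Loewner monotonicity of inversion. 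Both arguments exploit the null-space hypothesis in the same way (to ensure ${\vf S}_j$ vanishes on $\mathcal N$), but yours is coordinate-free and avoids the second diagonalization, while the paper's is more elementary in that it never invokes the Loewner order and ends with an explicit scalar inequality. Your Schur-complement step is a nice touch: it makes transparent exactly why positive definiteness of ${\vf B}$ suffices even though ${\vf B}$ need not respect the decomposition.
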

\begin{proof}
Let ${\vf B}={\vf U}{\vm \Lambda} {\vf U}\ts$ be the eigen-decomposition of $\vf B$. If ${\vf S}^\prime_\lambda = {\vm \Lambda}^{-1/2} {\vf U} \ts {\vf S}_\lambda {\vf U} {\vm \Lambda}^{-1/2}$ while ${\vf S}^\prime_j = {\vm \Lambda}^{-1/2} {\vf U} \ts {\vf S}_j {\vf U} {\vm \Lambda}^{-1/2}$ then it follows that $\text{tr}\{({\vf B} + {\vf S}_\lambda)^{-1} {\vf S}_j\} = \text{tr}\{({\vf I} + {\vf S}_\lambda^\prime)^{-1} {\vf S}_j^\prime\}$, while 
$ \text{tr}({{\vf S}_\lambda^-{\vf S}_j}) = \text{tr}({{\vf S}_\lambda^{ \prime -} {\vf S}_j^\prime})$, where ${\vf S}_\lambda^{ \prime -} = {\vm \Lambda}^{1/2} {\vf U} \ts {\vf S}_\lambda^- {\vf U} {\vm \Lambda}^{1/2}$. Now form the second  eigen-decomposition ${\vf S}_\lambda^\prime = {\bf VDV}\ts$. We have that 
$ \text{tr}\{({\vf I} + {\vf S}_\lambda^\prime)^{-1} {\vf S}_j^\prime\} = \text{tr}\{({\vf I}+{\vf D})^{-1} {\vf V}\ts {\vf S}^\prime_j {\vf V} \} $, while $\text{tr}({{\bf S}_\lambda^{ \prime -} {\vf S}_j^\prime}) = \text{tr}({\vf D}^- {\vf V}\ts {\vf S}^\prime_j {\vf V})$. Let $s_i$ denote the diagonal elements of ${\vf V}\ts {\vf S}^\prime_j {\vf V}$. By the conditions of the theorem the null space of ${\vf S}_\lambda$ is independent of $\vm \lambda$, and hence $s_i=0$ if $D_{ii} = 0$. So if $M = \{i; s_i \ne 0\}$, 
$ \text{tr}({{\vf S}_\lambda^-{\vf S}_j}) = \sum_{i \in M} s_i/D_{ii}$ while $\text{tr}\{({\vf B} + {\vf S}_\lambda)^{-1} {\vf S}_j\} = \sum_{i \in M} s_i/(D_{ii}+1)$. Since all the $D_{ii}$ in the summations are positive, by the positive semi-definiteness of ${\vf S}_\lambda$ and the definition of $M$, then the terms in the second summation are each smaller than the corresponding term in the first, and the result is proved.    
\end{proof}

The variance parameter $\sigma^2$ also has to be estimated, but by setting the derivative of $l_r$ with respect to $\sigma^2$ to zero and solving we obtain 
$$
\hat \sigma^2 = \|{\vf y}-{\vf X}\hat {\vm \beta}_\lambda \|^2/[n - \text{tr}\{({\vf X}\ts {\vf X} + {\vf S}_\lambda)^{-1}{\vf X}\ts {\vf X}\}].
$$

\subsection{Comparison with the EM algorithm and Newton optimization \label{em.sec}}

\begin{figure}

\eps{-90}{.7}{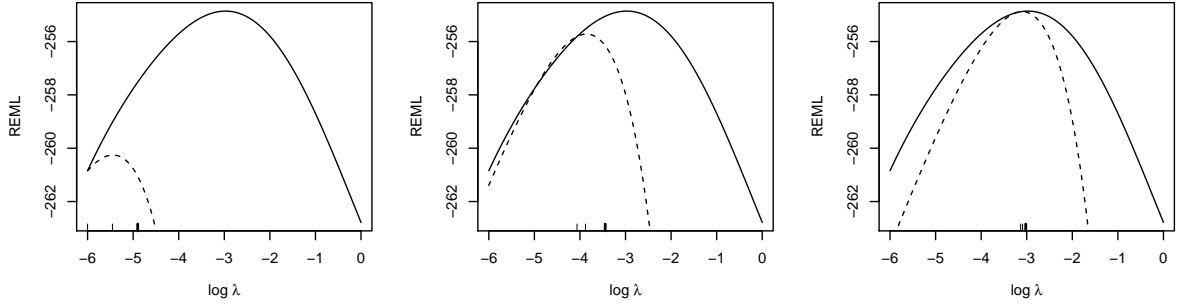}
\vspace*{-.5cm}

\caption{\small Alternate steps of update (\ref{g-schall.update}) for a rank 20 cubic spline smoother of Gaussian data. Each panel shows the log restricted likelihood as a continuous curve, while the EM Q-function is plotted as a dashed curve, shifted to match the log restricted likelihood at each step's start. The two thin ticks on the x axis show the start of the step and the maximum of the Q function. The thick black tick is update (\ref{g-schall.update}). \label{em.fig}}
\end{figure}

The update (\ref{g-schall.update}) can be viewed as a crude approximation to an EM update \citep{dempster1977EM}. Specifically, the EM Q-function for model (\ref{mixed.model}) has the form 
\beq
Q_{\lambda^\prime}({\vm \lambda}) = -\frac{ \|{\vf y}-{\vf X} \hat {\vm \beta}_{\lambda^\prime}\|^2 + \hat {\vm \beta}_{\lambda^\prime} \ts {\vf S}_\lambda \hat {\vm \beta}_{\lambda^\prime}}{2 \sigma^2} + \log |{\vf S}_\lambda/\sigma^2|_+/2 - \text{tr}\{({\vf X}\ts {\vf X} + {\vf S}_{\lambda^\prime})^{-1}{\vf S}_\lambda\}/2, \label{qfunc}
\eeq
and (\ref{g-schall.update}) would be the exact maximiser of $Q$, if $\text{tr}({{\vf S}_\lambda^-{\vf S}_j}) - \text{tr}\{({\vf X}\ts {\vf X} + {\vf S}_{\lambda^\prime})^{-1} {\vf S}_j\} \propto 1/\lambda_j$. 

In fact update (\ref{g-schall.update}) systematically makes larger changes to $\vm \lambda$ than the EM update, as illustrated in Figure \ref{em.fig}. For insight into why this happens consider updating a single $\lambda_j$ relating to a block $\lambda_j {\vf S}_j$ of ${\vf S}_\lambda$, so that $\text{tr}({{\vf S}_\lambda^-{\vf S}_j}) = k/\lambda_j$, where $k = \text{rank}({\vf S}_j)$. Then defining $\gamma = \text{tr}\{({\vf X}\ts {\vf X} + {\vf S}_{\lambda^\prime})^{-1} {\vf S}_j\}$ and $b = \hat {\vm \beta}_{\lambda^\prime} \ts {\vf S}_j \hat {\bm \beta}_{\lambda^\prime}/\sigma^2$, (\ref{g-schall.update}) seeks $\lambda_j$ to solve $k/\lambda_j = b + \gamma \lambda_j^\prime/\lambda_j$, whereas an EM step seeks $\lambda_j$ to solve $k/\lambda_j = b + \gamma $. If  $k/\lambda_j > b + \gamma$ then $\lambda_j$ has to be increased from $\lambda_j^\prime$ under either update. It has to be increased by more under (\ref{g-schall.update}), because $\gamma \lambda_j^\prime/\lambda_j$ decreases monotonically from $\gamma$ as $\lambda_j$ increases from $\lambda_j^\prime$. A similar argument shows that if $k/\lambda_j < b + \gamma$ then the required reduction in $\lambda_j$ is larger under (\ref{g-schall.update}) than under EM. Figure \ref{qsol.fig} shows the root finding problem corresponding to the EM update as a dashed curve, and corresponding to update (\ref{g-schall.update}) as a solid curve, for the same problem illustrated in Figure \ref{em.fig}. 

Figure \ref{em.fig} also illustrates the equivalent problem for the restricted marginal likelihood itself, which can be viewed as solving the same problem as the EM update, but with both $b$ and $\gamma$ being functions of $\lambda$: the dependence of $b$ on $\lambda$ is indirect via $\hat \beta_\lambda$, but the dependence of $\gamma$ is direct. This suggests using an accelerated EM update seeking to solve
$$
k/\lambda_j = b + \gamma(\lambda_j)
$$ 
where $\gamma(\lambda_j) = \text{tr}\{({\vf X}\ts {\vf X} + {\vf S}_{\lambda})^{-1} {\vf S}_j\}$. This obviously makes longer steps than the original EM update, as is illustrated by the dashed curve in Figure \ref{qsol.fig}. Update (\ref{g-schall.update}) also results in longer update steps than this accelerated EM step, as Figure \ref{qsol.fig} suggests and the following demonstrates.

\begin{figure}

\eps{-90}{.5}{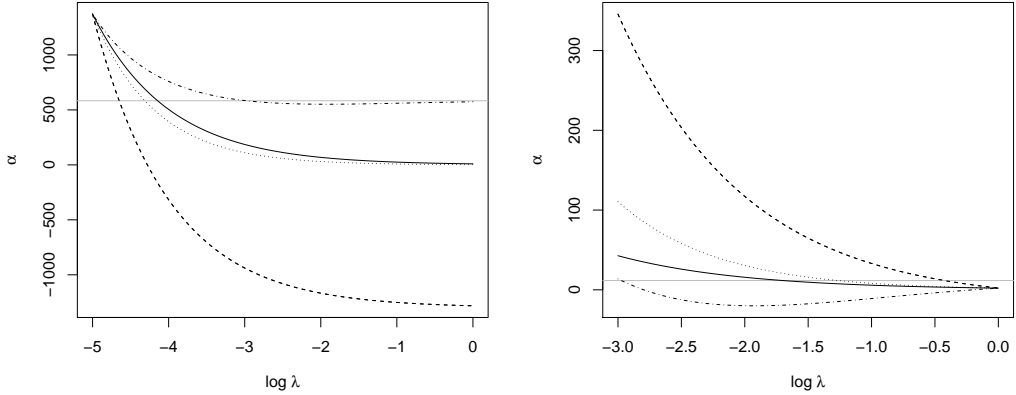}
\vspace*{-.5cm}

\caption{\small Illustration of the root finding problem corresponding to the various updates discussed in section \ref{em.sec}, for the same problem illustrated in Figure \ref{em.fig}. The grey horizontal line is the constant $b$. The right plot corresponds to $\log \lambda_j^\prime = -5$ and the right to  $\log \lambda_j^\prime = 0$. The EM update corresponds to the point at which the dashed curve crosses the $b$ line. The accelerated EM update corresponds to where the dotted curve crosses the b line. Update (\ref{g-schall.update}) corresponds to where the solid curve crosses the $b$ line. The REML optimum is where the dot-dashed curve crosses the $b$ line.
 \label{qsol.fig}}
\end{figure}

\begin{theorem} Consider updating a single $\lambda_j$ corresponding to a diagonal block $\lambda_j {\vf S}_j$ of ${\vf S}_\lambda$. Update (\ref{g-schall.update}) takes a longer step than the equivalent accelerated EM update.  
\end{theorem}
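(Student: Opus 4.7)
The plan is to reduce the comparison to a single monotonicity claim about the scalar function $\lambda \mapsto \lambda \gamma(\lambda)$, where $\gamma(\lambda) = \text{tr}\{({\vf X}\ts {\vf X} + {\vf S}_\lambda)^{-1} {\vf S}_j\}$ viewed as a function of $\lambda_j$ with the other smoothing parameters held fixed.

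First I would write the two updates side by side in the single-penalty setting of the theorem. With $k = \text{rank}({\vf S}_j)$, so that $\text{tr}({\vf S}_\lambda^- {\vf S}_j) = k/\lambda_j$, and with $\gamma = \gamma(\lambda_j^\prime)$ and $b = \hat {\vm \beta}_{\lambda^\prime}\ts {\vf S}_j \hat {\vm \beta}_{\lambda^\prime}/\sigma^2$, update (\ref{g-schall.update}) reduces to $\lambda_j^* = (k - \gamma \lambda_j^\prime)/b$, while the accelerated EM update $\lambda_j^\dagger$ is a root of $k/\lambda_j = b + \gamma(\lambda_j)$, equivalently $\lambda_j^\dagger b = k - \lambda_j^\dagger \gamma(\lambda_j^\dagger)$. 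Subtracting yields the clean identity
$$
\lambda_j^* - \lambda_j^\dagger = \frac{\lambda_j^\dagger \gamma(\lambda_j^\dagger) - \lambda_j^\prime \gamma(\lambda_j^\prime)}{b}.
$$
Since $b > 0$, this shows that $\lambda_j^*$ lies at least as far from $\lambda_j^\prime$ as $\lambda_j^\dagger$ in either direction provided $\lambda \mapsto \lambda \gamma(\lambda)$ is non-decreasing: in the increasing case $\lambda_j^\dagger \geq \lambda_j^\prime$ forces $\lambda_j^* \geq \lambda_j^\dagger$, while in the decreasing case the same monotonicity forces $\lambda_j^* \leq \lambda_j^\dagger$. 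So the theorem collapses to the monotonicity claim.

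Second I would prove that $\lambda \gamma(\lambda)$ is non-decreasing via simultaneous diagonalization. Let ${\vf A} = {\vf X}\ts {\vf X} + \sum_{i \ne j}\lambda_i {\vf S}_i$, which we may assume positive definite (otherwise absorb a small multiple of ${\vf S}_j$ into ${\vf A}$ and peel it back off $\lambda_j$). Form the eigendecomposition ${\vf A}^{-1/2} {\vf S}_j {\vf A}^{-1/2} = {\vf U}{\vf D}{\vf U}\ts$ with $D_{ii} = d_i \geq 0$. A short calculation then gives
$$
\gamma(\lambda) = \text{tr}\{({\vf A} + \lambda {\vf S}_j)^{-1} {\vf S}_j\} = \sum_i \frac{d_i}{1 + \lambda d_i},
$$
so that $\lambda \gamma(\lambda) = \sum_i \lambda d_i/(1 + \lambda d_i)$ is a sum of maps $[0,\infty)\to[0,1)$ with non-negative derivative $d_i/(1 + \lambda d_i)^2$, hence non-decreasing.

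The main obstacle is essentially just notational book-keeping: checking that both sign cases (update increases or decreases $\lambda_j$) reduce to the same scalar monotonicity, and verifying that an accelerated EM root $\lambda_j^\dagger$ exists on the appropriate side of $\lambda_j^\prime$. The latter is a straightforward intermediate value argument using $\gamma(0^+)$ finite, $\gamma(\infty) = 0$, and the sign of $k/\lambda_j^\prime - b - \gamma(\lambda_j^\prime)$. Neither the simultaneous diagonalization nor the ensuing term-by-term monotonicity presents any real difficulty.
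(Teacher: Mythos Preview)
Your proposal is correct and takes essentially the same approach as the paper: both arguments reduce to the monotonicity of $\lambda \mapsto \lambda\gamma(\lambda)$, established via simultaneous diagonalization of ${\vf S}_j$ against ${\vf X}\ts{\vf X} + {\vf S}_{-j}$ (the paper reaches this through a QR step followed by an eigendecomposition, which is the same thing since ${\vf R}\ts{\vf R} = {\vf A}$). Your explicit identity $\lambda_j^* - \lambda_j^\dagger = \{\lambda_j^\dagger\gamma(\lambda_j^\dagger) - \lambda_j^\prime\gamma(\lambda_j^\prime)\}/b$ is a slightly cleaner packaging than the paper's comparison of the two root-finding functions $\alpha$ and $\alpha^\prime$, but the substance is identical.
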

\begin{proof} 
Under the stated conditions $\text{tr}({{\vf S}_\lambda^-{\vf S}_j}) = k/\lambda_j$ where $k=\text{rank}({\vf S}_j)$. Let $\gamma(\lambda_j) = \text{tr}\{({\vf X}\ts {\vf X} + {\vf S}_\lambda)^{-1} {\vf S}_j\}$ and $\alpha(\lambda_j) = k/\lambda_j- \gamma(\lambda_j)$. The accelerated EM step seeks $\lambda_j$ such that $\alpha(\lambda_j) = b $ where $b=\hat {\vm \beta}_{\lambda^\prime} \ts {\vf S}_j \hat {\vm \beta}_{\lambda^\prime}/\sigma^2$, increasing $\lambda_j$ if $\alpha(\lambda_j) > b $ and decreasing $\lambda_j$ if $\alpha(\lambda_j) < b $. Update (\ref{g-schall.update}) is exactly equivalent to seeking $\lambda_j $ such that $\alpha^\prime(\lambda_j) = b$, where $\alpha^\prime(\lambda_j) = k/\lambda_j - \gamma(\lambda_j^\prime)\lambda^\prime_j/\lambda_j$. By definition $\alpha^\prime(\lambda_j^\prime) = \alpha(\lambda_j^\prime)$, so to prove the result it suffices to prove that $\alpha^\prime (\lambda_j) > \alpha(\lambda_j)$ when $\lambda_j>\lambda_j^\prime$ and $\alpha^\prime (\lambda_j) < \alpha(\lambda_j)$ when $\lambda_j<\lambda_j^\prime$.  Now let ${\vf S}_{-j} = \sum_{i\ne j} \lambda_i {\vf S}_i$, and let $\vf B$ be any matrix such that ${\vf B}\ts {\vf B} = {\vf S}_{-j}$. Consider the QR decomposition 
$
({\vf X}\ts, {\vf B}\ts ) = {\vf R}\ts {\vf Q}\ts
$
and form the symmetric semi-definite eigen-decomposition ${\vf U}{\vm \Lambda} {\vf U} \ts = {\vf R}\its  {\vf S}_j {\vf R}^{-1}$.
Routine manipulation shows that $\gamma(\lambda_j) = \sum_{i=1}^k \Lambda_i/(1+\lambda_j \Lambda_i)$. It follows that $\gamma^\prime (\lambda_j) = \sum_{i=1}^k \Lambda_i/(\lambda_j/\lambda_j^\prime + \lambda_j \Lambda_i)$. Hence $\gamma^\prime(\lambda_j) < \gamma(\lambda_j)$ if $\lambda_j > \lambda_j^\prime$ and  $\gamma^\prime(\lambda_j) > \gamma(\lambda_j)$ if $\lambda_j < \lambda_j^\prime$, proving the result. 
\end{proof}

Taking longer steps than a plain or accelerated EM algorithm would be of limited utility if those steps overshot the maximum of the restricted likelihood and require repeated step length control, especially when close to the optimum. In practice such overshoot does not occur. The following theorem offers some insight into the reasons, albeit only asymptotically. 

We assume infil asymptotics and require two technical assumptions. 

\noindent {\bf Assumption 1}: If ${\vf Q}_1$ denotes the first $n$ rows of $\vf Q$ from theorem 2 and ${\vf a} = {\vf U}\ts {\vf Q}_1 \ts {\vf y}$, then $a_i^2 = O_p(n^{\beta_i})$ where $\beta_i > 0$ for all $i$. 

\noindent The assumption is less obscure than it at first appears. To see this, first consider the very mild assumption that the model is sufficiently reasonable that ${\vf y}\ts \hat {\vm \mu}_0 = O_p(n)$, where $\hat {\vm \mu}_0 = {\vf X} \hat {\vm \bp}$, when $\lambda_j=0$. In fact $\hat {\vm \mu}_0 = {\vf Q}_1 {\vf U U}\ts {\vf Q}_1\ts {\vf y}$, and so ${\vf a}\ts {\vf a} = {\vf y}\ts \hat {\vm \mu}_0 $ and  $\text{mean}(a_i^2) = O_p(n/p)$ where $p$ is the fixed dimension of $\vf a$. Now let ${\vf C}= {\vf Q}_1 {\vf U}$, so that $\hat {\vm \mu}_0 = {\vf CC}\ts {\vf y} = \sum_{i=1}^p {\vf C}_{\cdot i}{\vf C}_{\cdot i}\ts {\vf y}$. $\hat {\vm \mu}_0$ can be decomposed into $p$ components $\hat {\vm \mu}_0 = \sum_i \hat {\vm \mu}_i$, where $\hat {\vm \mu}_i = {\vf C}_{\cdot i}{\vf C}_{\cdot i}\ts {\vf y}$. The assumption that ${\vf y} \ts \hat {\vm \mu}_i = O_p(n)$ is essentially equivalent to assuming that no model component is orthogonal to $E({\vf y})$, but since $a_i = {\vf C}_{\cdot i}\ts {\vf y}$, this assumption is equivalent to $a_i^2 = O_p(n)$. So Assumption 1 is reasonable, and in most cases we expect $\beta_i=1$.

\noindent {\bf Assumption 2}: In the notation of theorem 2, $\lambda \Lambda_i = O_p(n^{\alpha_i})$, where $\alpha_i$ is an unknown real constant. 

\noindent This simply assumes that each $\lambda \Lambda_i$ has some polynomial dependence on $n$, but not that we know what it is. 

\begin{theorem} Let the setup be as in theorem 2, and let $\hat \lambda_j$ denote the maximizer of the restricted likelihood with respect to $\lambda_j$. Given assumptions 1 and 2, and for an initial  $\lambda_j$ sufficiently close to $\hat \lambda_j$, the update, $\lambda_j^*$, given by (\ref{g-schall.update}) is either between $\lambda_j $ and $\hat \lambda_j$, or tends to $\hat \lambda_j$ as $n \to \infty$.
\end{theorem}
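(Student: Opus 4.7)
The plan is to exploit the closed-form representations available from the diagonalization in the proof of Theorem 2. Writing ${\vf a} = {\vf U}\ts {\vf Q}_1\ts {\vf y}$ (Assumption 1) and letting $M$ index the strictly positive eigenvalues in ${\vm \Lambda}$, routine computation gives
$$
\alpha(x)=\sum_{i\in M}\frac{1}{x(1+x\Lambda_i)},\qquad \alpha^\prime(x)=\sum_{i\in M}\frac{1}{x(1+\lambda_j\Lambda_i)},\qquad b(x):=\frac{\hat{\vm\beta}_x\ts{\vf S}_j\hat{\vm\beta}_x}{\sigma^2}=\frac{1}{\sigma^2}\sum_{i\in M}\frac{\Lambda_i a_i^2}{(1+x\Lambda_i)^2},
$$
with $\alpha^\prime$ defined relative to the current iterate $\lambda_j$. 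All three functions are strictly positive and decreasing on $(0,\infty)$; together with Theorem 2 one has $\alpha^\prime(\lambda_j)=\alpha(\lambda_j)$, $\alpha^\prime > \alpha$ for $x > \lambda_j$, and $\alpha^\prime < \alpha$ for $x < \lambda_j$. The REML optimum satisfies $\alpha(\hat\lambda_j)=b(\hat\lambda_j)$ and the update satisfies $\alpha^\prime(\lambda_j^*)=b(\lambda_j)$.

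WLOG take $\lambda_j < \hat\lambda_j$ (the opposite case is symmetric). Sufficient closeness of $\lambda_j$ to $\hat\lambda_j$ guarantees $\alpha(\lambda_j) - b(\lambda_j) > 0$, hence $\alpha^\prime(\lambda_j) > b(\lambda_j)$, and monotonicity of $\alpha^\prime$ gives $\lambda_j^* > \lambda_j$. If additionally $\lambda_j^* \le \hat\lambda_j$ we are in the first alternative; otherwise $\lambda_j^* > \hat\lambda_j$, and it remains to establish $\lambda_j^* \to \hat\lambda_j$ as $n \to \infty$.

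In that remaining case, subtracting the REML identity from the update equation yields
$$
\alpha^\prime(\lambda_j^*) - \alpha^\prime(\hat\lambda_j) \;=\; \bigl[b(\lambda_j) - b(\hat\lambda_j)\bigr] - \bigl[\alpha^\prime(\hat\lambda_j) - \alpha(\hat\lambda_j)\bigr].
$$
By the mean value theorem the left-hand side has magnitude $|\lambda_j^* - \hat\lambda_j|\cdot|d\alpha^\prime/dx|$ at an intermediate point, and direct calculation shows that each bracket on the right factors as $(\hat\lambda_j - \lambda_j)$ times an explicit positive sum over $M$. I would then use Assumption 2 to partition $M$ according to whether $\lambda_j\Lambda_i$ tends to $0$ or $\infty$, and Assumption 1 to bound the $a_i^2$ factors by a power of $n$, and compare the two bracketed sums regime by regime against the slope $|d\alpha^\prime/dx|$, which is dominated by the small-$\lambda_j\Lambda_i$ regime. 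In the small regime each summand in either bracket vanishes like $\Lambda_i = o(1)$ and straightforward accounting gives $o(|\hat\lambda_j - \lambda_j|\cdot|d\alpha^\prime/dx|)$; the large-regime contributions require using the aggregate REML identity to cancel the leading-order mismatch between the two sums, yielding an $o_p(1)$ bound on $|\lambda_j^* - \hat\lambda_j|$.

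The main obstacle is this last regime-wise comparison. The REML condition provides only the single scalar identity $\alpha(\hat\lambda_j) = b(\hat\lambda_j)$, whereas the bracketed difference is a weighted sum with index-dependent weights, so cancellation of the dangerous large-$\lambda_j\Lambda_i$ contributions must be extracted from the REML balance in aggregate rather than term by term, with Assumptions 1 and 2 ensuring that the surviving terms all decay at a suitable polynomial rate in $n$.
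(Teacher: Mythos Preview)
Your plan diverges from the paper's argument in a way that creates exactly the obstacle you flag at the end, and that obstacle is not resolved by what you have written.

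The paper does not split into the cases ``$\lambda_j^*\le\hat\lambda_j$'' versus ``overshoot and then control the overshoot.'' Instead it linearizes both root-finding problems in $\rho=\log\lambda_j$ about the current $\rho$, obtaining
\[
\rho^*-\rho=\frac{k/\lambda-\gamma(\lambda)-b(\lambda)}{k/\lambda-\gamma(\lambda)},\qquad
\hat\rho-\rho=\frac{k/\lambda-\gamma(\lambda)-b(\lambda)}{k/\lambda-\delta(\lambda)},
\]
where $\gamma(\lambda)=\text{tr}\{({\vf X}\ts{\vf X}+{\vf S}_\lambda)^{-1}{\vf S}_j\}$ and $\delta(\lambda)=-(\ildif{\gamma}{\rho}+\ildif{b}{\rho})$. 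The whole question then reduces to comparing $\gamma$ and $\delta$, and this comparison is \emph{term by term}: writing $\lambda\gamma=\sum_i\gamma_i$ and $\lambda\delta=\sum_i\delta_i$ via the eigen-decomposition of Theorem~2, one finds $\delta_i\ge\gamma_i$ precisely when $\lambda\Lambda_i>(2a_i^2-1)^{-1}$. Under Assumptions~1 and~2 this holds in the limit whenever $\alpha_i>-\beta_i$; and whenever $\alpha_i\le-\beta_i$, since $\beta_i>0$ we also have $\alpha_i<-\beta_i/2$, which forces $\gamma_i\to 0$ and $\delta_i\to 0$. So every index falls into one of two clean regimes: either it contributes $\delta_i\ge\gamma_i$, or it contributes nothing in the limit. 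This immediately gives $\gamma\le\delta$ (hence $\lambda^*$ between $\lambda$ and $\hat\lambda$) and/or $\lambda\gamma,\lambda\delta\to 0$ (hence $\rho^*\to\hat\rho$).

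Your route, by contrast, lands on the identity
\[
\alpha^\prime(\lambda_j^*)-\alpha^\prime(\hat\lambda_j)=\bigl[b(\lambda_j)-b(\hat\lambda_j)\bigr]-\bigl[\alpha^\prime(\hat\lambda_j)-\alpha(\hat\lambda_j)\bigr],
\]
and then tries to bound the right-hand side against the slope of $\alpha^\prime$. The difficulty you correctly identify is that the two brackets on the right carry different index-dependent weights, so you cannot cancel the large-$\lambda\Lambda_i$ contributions term by term; and the single scalar REML identity $\alpha(\hat\lambda_j)=b(\hat\lambda_j)$ is not enough information to force an aggregate cancellation at the rate you need. You have not shown how to close this gap, and I do not see that it can be closed along these lines without essentially reintroducing the $\gamma$ versus $\delta$ comparison. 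The paper's linearization in $\log\lambda$ is precisely what converts the problem into a termwise inequality and removes the need for any aggregate cancellation argument.
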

\begin{proof} Dropping the subscript $j$, let $\rho = \log \lambda$, and let $\lambda$ denote the $j^{\rm th}$ smoothing parameter at the start of the updates. Consider again the root finding problems equivalent to the update (\ref{g-schall.update}) and to maximization of the restricted marginal likelihood. Applying Taylor's theorem to the components of these root finding problems, we have that for $\lambda$ sufficiently close to $\hat \lambda$, 
$$
\frac{k}{\lambda} - \frac{k}{\lambda} (\hat \rho - \rho) - \gamma(\lambda) - \left (\dif{\gamma}{\rho}  + \dif{b}{\rho}\right ) (\hat\rho - \rho ) = b(\lambda)
$$ 
where the derivatives are evaluated at the initial value, $\rho$, and
$$
\frac{k}{\lambda} - \frac{k}{\lambda} (\rho^* - \rho) - \gamma(\lambda)  + \gamma(\lambda)(\rho^* - \rho ) = b(\lambda).
$$ 
So, if $\gamma(\lambda) \le \delta(\lambda) = -(\ildif{\gamma}{\rho}+\ildif{b}{\rho})$, then $\lambda < \lambda^* \le \hat \lambda$. Also, if $\lambda \gamma(\lambda) \to 0$ and $\lambda \delta(\lambda) \to 0$, as $n \to \infty$, then $|\rho^* - \hat \rho| \to 0$.

Now consider the actual behaviour of $\gamma(\lambda)$ and $\delta(\lambda)$.
Using the QR and eigen-decomposition steps of theorem 2, some routine manipulation yields $$
\gamma (\lambda) = \frac{1}{\lambda} \sum_i \frac{\lambda \Lambda_i}{1 + \lambda \Lambda_i} \text{   and   } \delta(\lambda) = \frac{1}{\lambda} \sum_i \frac{\lambda \Lambda_i}{1 + \lambda \Lambda_i} \left \{
\frac{(1 + 2 a_i^2) \lambda \Lambda_i +  \lambda^2 \Lambda_i^2}{1 + 2 \lambda \Lambda_i + \lambda^2 \Lambda_i^2}
\right \}
$$ 
So the $i^{\rm th}$ term of $\delta$ will be larger that the $i^{\rm th}$ term of $\gamma$ if $\lambda \Lambda_i > (2 a_i^2 - 1)^{-1}$: if $\lambda \Lambda_i = O_p(n^{\alpha_i})$ this dominance occurs in the $n \to \infty$ limit when $\alpha_i > -\beta_i$. Furthermore if $\alpha_i < - \beta_i/2$ then the $i^{\rm th}$ terms of $\gamma\lambda$ and $\delta\lambda$ both tend to zero in the large sample limit. So in the large sample limit, sufficiently close to $\hat \lambda$,  there are only two non-exclusive possibilities:  $\gamma(\lambda) < \delta(\lambda) $ so that $\lambda^*$ lies between $\lambda$ and $\hat \lambda$, and/or $\lambda \delta, \lambda \gamma \to 0$ so that $|\lambda^*-\hat \lambda| \to 0$.  
\end{proof}

The solution of the linearised root finding problem corresponding to the restricted likelihood maximisation is the Newton method update. So a corollary of theorem 3 is that iteration of update (\ref{g-schall.update}) will generally converge more slowly than Newton's method, when close to the optimum, and certainly no faster.

\section{Beyond the linear Gaussian case \label{sec.general}}

Now consider replacing the Gaussian log likelihood with another log likelihood, $l$, meeting the Fisher regularity conditions, so that the improper log joint density becomes
$$
\log f_\lambda ({\vf y},{\vm \beta}) = l({\vm \beta}) - {\vm \beta} \ts {\vf S}_\lambda {\vm \beta}/2 + \log |{\vf S}_\lambda|_+  + c 
$$
and in the large sample limit ${\vm \beta}|{\vf y} \sim N(\hat {\vm \beta}_\lambda,{\vf V}_\lambda)$ where ${\vf V}_\lambda^{-1} = {\cal H}_\lambda$ or $\E {\cal H}_\lambda$ and $ {\cal H}_\lambda = -\ilpddif{l}{{\vm \beta}}{{\vm \beta}\ts} + {\vf S}_\lambda$. Newton's method can be used to find $\hat {\vm \beta}_\lambda$, with the usual modifications to guarantee convergence \citep[e.g.][\S 5.1.1]{wood2015core}. Following \cite{wood2015plig} the log Laplace approximate marginal likelihood in this case is conveniently expressed as 
$$
l_r = l(\hat {\vm \beta}_\lambda) - \hat {\vm \beta}_\lambda \ts {\vf S}_\lambda \hat {\vm \beta}_\lambda/2 +  \log{|{\vf S}_\lambda|_+}/2 -  \log{|{\cal H}_\lambda|}/2 + c.
$$   
Defining ${\vf H} = -\ilpddif{l}{{\vm \beta}}{{\vm \beta}\ts}$, we have
$$
\pdif{l_r}{\lambda_j} = - \hat {\vm \beta}_{\lambda} \ts {\vf S}_j \hat {\vm \beta}_{\lambda} /2 + \text{tr} ({{\vf S}_\lambda^-{\vf S}_j})/2 - \text{tr}\{{\vf V}_\lambda {\vf S}_j\}/2 - \text{tr}\{{\vf V}_\lambda \ilpdif{{\vf H}}{\lambda_j}\}/2.
$$
The direct dependence of  $\ilpddif{l}{{\vm \beta}}{{\vm \beta}\ts}$ on $\lambda_j$ is inconvenient. However the PQL and performance oriented iteration methods for $\vm \lambda$ estimation of \cite{breslow.clayton} and \cite{gu.perf.iter} both neglect the dependence of $\ilpddif{l}{{\vm \beta}}{{\vm \beta}\ts}$ on $\vm \lambda$, on the basis that it anyway tends to zero in the large sampe limit. If we follow these precedents then the development follows the Gaussian case and the update is 
\beq
\lambda_j^* = \frac{\text{tr}({{\vf S}_\lambda^-{\vf S}_j}) -
\text{tr}\{{\vf V}_{\lambda^\prime} {\vf S}_j\}}{\hat {\vm \beta}_\lambda \ts {\vf S}_j \hat {\vm \beta}_\lambda} \lambda_j. \label{general-update}
\eeq
If $\ilpddif{l}{{\vm \beta}}{{\vm \beta}\ts}$ is independent of $\vm \lambda$ at finite sample size, as is the case for some distribution -- link function combinations in a generalized linear model setting, then the update is guaranteed to increase $l_r$ under step size control, but otherwise this is not the case, and in practice the $\vm \lambda $ estimate no longer exactly maximizes $l_r$. 

Theorem 1, required to guarantee that $\lambda_j^*>0$, will hold if $V_\lambda$ is based on the expected Hessian of the negative log likelihood, but if it is based on the observed Hessian, then this must be positive definite for the Theorem to hold. Hence, if the observed Hessian is not positive definite then the expected Hessian, or a suitable nearest positive definite matrix to the observed Hessian, should be substituted.

As in the Gaussian case a link to the EM update can again be established via an approximate $Q$ function, obtained by taking a second order Taylor expansion of $l$ around $\hat {\vm \beta}_\lambda$, and using the large sample distribution of ${\vm \beta} | {\vf y}$: 
$$
Q_{\lambda^\prime}^*({\vm \lambda}) = l(\hat {\vm \beta}_{\lambda^\prime}) - \hat {\vm \beta}_{\lambda^\prime} \ts {\vf S}_\lambda \hat {\vm \beta}_{\lambda^\prime}/2 +
 \log{|{\vf S}_\lambda|_+}/2 - \text{tr}({\vf V}_{\lambda^\prime} {\vf S}_\lambda)/2 - \text{tr}({\vf V}_{\lambda^\prime} \ilpddif{l}{{\vm \beta}}{{\vm \beta}\ts})/2.
$$
The final term is then neglected, again following the PQL type assumption. 

In the case of a penalized generalized linear model, the general update (\ref{general-update}) becomes
$$
\lambda_j^* = \phi\frac{\text{tr}({{\vf S}_\lambda^-{\vf S}_j}) -
\text{tr}\{({\vf X}\ts {\vf W X} + {\vf S}_\lambda)^{-1} {\vf S}_j\}}{\hat {\vm \beta}_\lambda \ts {\vf S}_j \hat {\vm \beta}_\lambda} \lambda_j,
$$
where $\vf W$ is the diagonal matrix of weights at convergence of the usual penalized iteratively re-weighted least squares iteration used to find $\hat {\vm \beta}_\lambda$, and $\phi$ is the scale parameter, which can be estimated using the obvious equivalent of $\hat \sigma^2$. Again, under the restrictions of the Fellner-Schall method, this update corresponds to the  Schall update for the generalized linear mixed model case. 





\section{Simple examples}

\begin{figure}
\vspace*{-.6cm }

\eps{-90}{.6}{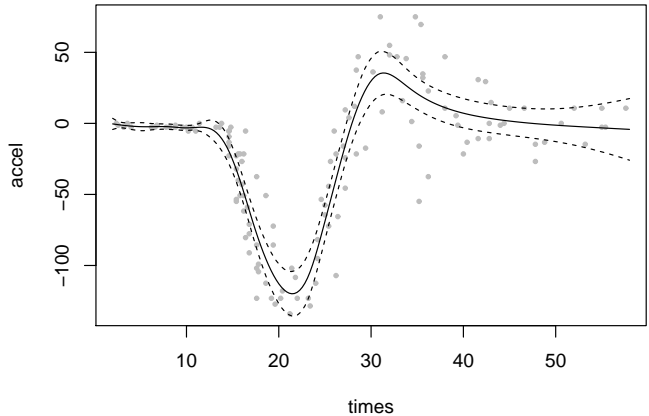}
\vspace*{-.5cm}

\caption{\small An adaptive smoother fitted to the motorcycle data using the proposed method. A fit by direct restricted marginal likelihood maximisation is indistinguishable. Previous Fellner-Schall methods could not be used for this example, as it lacks the required special structure of $S_\lambda$. \label{mcycle.fig}}
\end{figure}

This section presents two brief example applications of the generalised Fellner-Schall method developed here, which would be impossible or impractically slow with previously published versions of the method.

The first example is a simple Gaussian adaptive smooth of the motorcycle data from \cite{silverman85}, available in the MASS package \citep{MASS} in R \citep{rcore}. The data are accelerations of the head of a crash test dummy against time.  An adaptive smooth as described in \cite{wood2011} is appropriate for smoothing the acceleration data against time, with the degree of smoothness of a P-spline \citep{Eilers&Marx96} varying smoothly with time. The smooth used has five smoothing parameters with the penalties acting on overlapping subsets of the 40 model coefficients, thereby violating the structural conditions on $S_\lambda$ required by previously published Fellner-Shall iterations. The smoothing parameter optimization problem is relatively challenging as the smoothing parameters are only weakly identified from this relatively small dataset. 

The smooth was estimated using the method presented here and by the method of \cite{wood2011} using quasi-Newton optimization of the restricted marginal likelihood. In this way both methods have the same leading order computational cost per iteration, facilitating comparison. Full Newton optimization is more costly per iteration, but would require fewer iterations than quasi-Newton. Starting from all smoothing parameters set to 1, and without step length control, the new method converged in 39 steps, as against 32 for the quasi-Newton method. The fits are identical to graphical accuracy with equal effective degrees of freedom of 12.22. See Figure \ref{mcycle.fig}.

The second example is a Cox proportional hazards model for time to recurrence of colon cancer for $n=929$ patients in a chemotherapy trial \citep{moertel1995colon} available in the survival package \citep{survival.package} in R. In principle it is possible to use previously published Fellner-Schall methods for this example, by using a trick involving Poisson regression on artificially replicated data, but this entails an $O(n)$ multiplication of the computational cost, which is impractically uncompetitive with existing methods. With the update (\ref{general-update}) the cost is kept at the $O(np^2)$ that is appropriate for Cox regressions. 

The linear predictor for the Cox regression had parametric effects for whether the colon was perforated or not, obstructed or not and whether the tumour had adhered to neighbouring organs. In addition a 3 level factor indicated the control group, treatment with one drug of interest or treatment with a drug combination. Smooth effects of age were included separately for males and females along with a smooth effect for number of affected lymph nodes. For this example the new iteration, without step length control, converged in 15 steps, compared to 16 steps for direct quasi-Newton optimization using the methods of \cite{wood2015plig}. The parametric model coefficients differ only in the 4th significant digit, while differences in the estimated smooth effects are also small, as shown in Figure \ref{colon.fig}. 

   \begin{figure}
\vspace*{-.6cm }

\eps{-90}{.8}{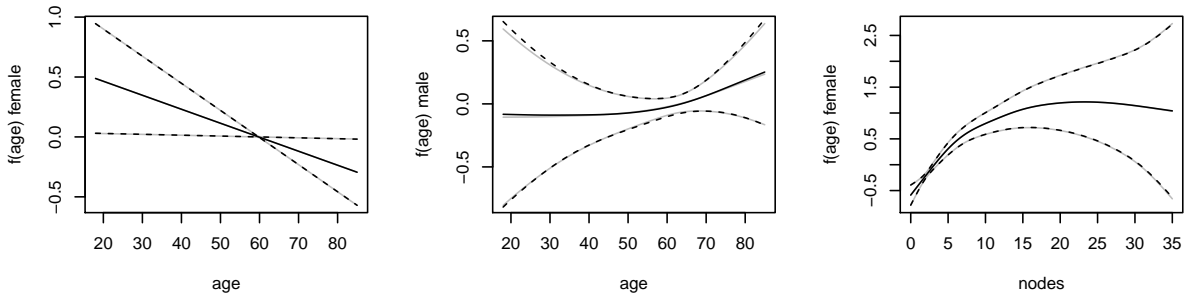}
\vspace*{-.5cm}

\caption{\small Estimated smooth effects for the colon cancer survival model, with 95\% confidence intervals. The estimates using full Laplace approximate restricted marginal partial likelihood are shown in grey, with the new method estimates overlaid in black. \label{colon.fig}}
\end{figure}

\section{A Tweedie location, scale and shape model for Mackerel \label{mack.sec}}

We now return to the motivating example, from the introduction, of modelling mackerel ({\em Scomber scombrus}) egg densities from survey data collected off the west coast of Europe in 2010. The data consist of counts of eggs in samples taken from the water column at the sampling stations shown in Figure \ref{mack-space.fig}. Available covariates are temperature and salinity at 20 m depth, water volume sampled (an offset), spatial location as longitude and latitude (converted to km east and km north), the identity of the ship collecting the data, and the sea bed depth. The latter is important as Mackerel prefer to spawn near the continental shelf edge, which occurs at a depth contour of about 200m. 

A common theme with data of this type is that the counts are highly over-dispersed relative to a Poisson distribution, but with a mean variance relationship that is less extreme than that suggested by a negative binomial distribution \cite[see e.g.][\S 5.4.1]{wood2006igam}. A \cite{tweedie1984} distribution often offers a much better characterisation of the distribution, but it would often be useful to allow the shape and scale parameters of the Tweedie distribution to vary with covariates, rather than only allowing covariates for the mean. Specifically, the Tweedie distribution assumes that the variance of random variable $y_i$ is related to its mean, $\mu_i$ via $\text{var}(y_i) = \phi_i \mu_i^{p_i}$. where the parameters $\phi_i$ and $p_i$ are parameters usually taking one fixed value for all $i$. For the mackerel data it would be useful to allow $p_i$ and $\phi_i$ to be smooth functions of covariates - particularly sea bed depth.

In particular we would like to estimate the model 
\begin{multline}
\log(\mu_i) = g_1({\tt lo}_i, {\tt la}_i) + g_2({\tt T20}_i) + g_3({\tt S20}_i) + g_4({\tt b.depth}^{1/2}) + b_{s(i)} + \log({\tt vol}_i),\\ h(p_i) = g_5({\tt b.depth}^{1/2}),~~~ \log(\phi_i) = g_6({\tt b.depth}^{1/2}), ~~~ {\tt count}_i \sim \text{Tweedie}(\mu_i,p_i,\phi_i) \label{twlss}
\end{multline}
where the $g_k$ are smooth functions, $h$ is a known link function designed to keep $1<p<2$, $s(i)$ indicates which ship collected sample $i$ and $b_{s(i)}$ are independent $N(0,\sigma^2_b)$ random effects. We represented the spatial effect using a rank 150 Duchon spline with first order derivative penalisation \citep[see][]{duchon77, miller2014}, and the other terms with rank 10 cubic penalised regression splines. The model can be estimated, {\em given smoothing parameters}, using the Newton iteration detailed in \cite{wood2015plig} and available in R package {\tt mgcv}. However the estimation of smoothing parameters using \cite{wood2015plig} would require third and fourth derivatives of the Tweedie density and these are not readily available, for the reasons given in the introduction. We therefore estimated the smoothing parameters using the iterative update (\ref{general-update}).  

Estimation converged in 13 iterations taking 17 seconds (single core of a mid range laptop computer). In comparison it took 11 seconds to fit the same model, but with fixed $p$ and $\phi$, using the method of \cite{wood2015plig} in R package {\tt mgcv}. The AIC for model (\ref{twlss}) was 180 lower than for the fixed $p$ and $\phi$ version, although residual plots (not shown) are reasonable for both models. The estimated spatial smoother is shown in Figure \ref{mack-space.fig}b, while the remaining effects are plotted in Figure \ref{mack-eff.fig}. Notice how the smooth effects of sea depth all have a pronounced peak at around $\sqrt{200}$, corresponding to the edge of the continental shelf. Both egg density and its variability appear to be peaking near the shelf edge.

\begin{figure}
\vspace*{-.6cm }

\eps{-90}{.6}{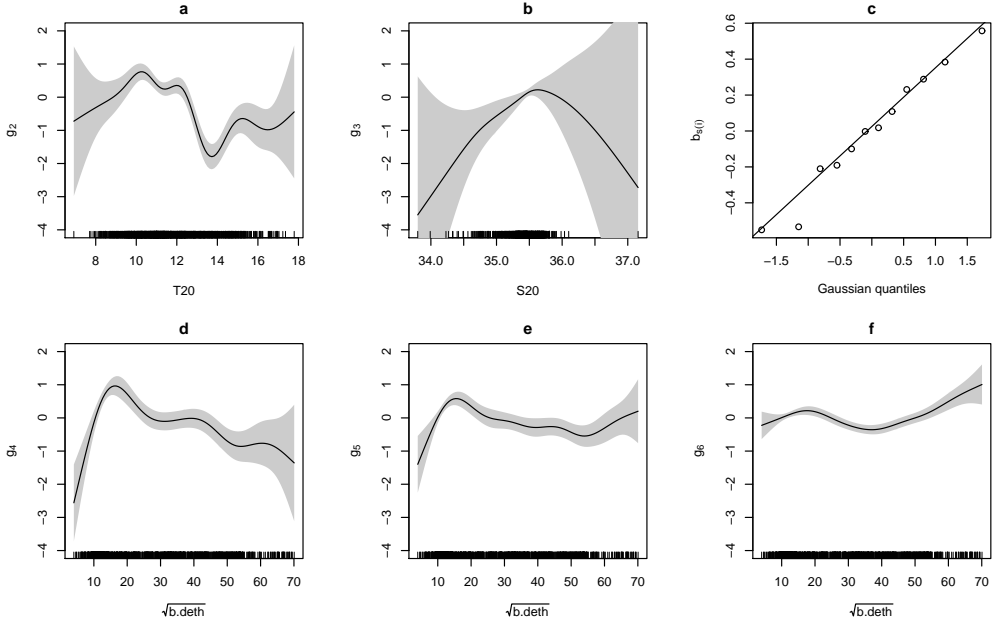}
\vspace*{-.5cm}

\caption{\small Estimated smooth effects for the Tweedie location scale and shape model of the Mackerel egg survey data discussed in section \ref{mack.sec}. Panel c shows a QQ-plot for the predicted ship level random effects. Panels d, e and f are the smooth effects of sea depth for $\mu$, $p$ and $\phi$ respectively. Notice how they all have a peak close to $\sqrt{200}$, the depth representing the continental shelf edge.  The shaded regions are approximate 95\% confidence intervals. \label{mack-eff.fig}}

\end{figure}

\section{Discussion}

Prior to the work reported here, the Fellner-Schall method could only be applied to a subset of the smooth additive models that could be estimated by direct Laplace approximate marginal likelihood maximisation. The generalizations introduced here remove this obstacle, and we have also strengthened the theoretical underpinnings of the method. The major advantage of the method is its simplicity: the direct method of \cite{wood2015plig} requires evaluation of third or fourth order derivatives of the log likelihood, which are not required by the generalized Fellner-Schall method. In addition direct optimization of the Laplace approximate marginal likelihood requires nested optimization and implicit differentiation to obtain derivatives of $\beta$ with respect to $\lambda$. Such an approach involves considerable effort if it is to be numerically stable, which is not required by the modified Fellner Schall iteration. The main theoretical cost is that, beyond the Gaussian case, we are forced to make the same simplification that underpins the PQL and performance oriented iteration methods, and neglect the dependence of the Hessian of the log likelihood on the smoothing parameters. 

As we demonstrated in section \ref{mack.sec}, our generalized Fellner-Schall method can be applied to cases in which alternative estimation methods would be very difficult to implement, but it also offers advantages in settings which are in principle less numerically taxing. The method can be applied to non-standard smooth models provided that we can obtain the first and second derivatives of the log-likelihood, which are anyway required for Newton optimization of model coefficients. This greatly simplifies the process of implementing non-standard models for particular applied problems, freeing the modeller from the more onerous aspects of implementation, to concentrate on development of the model itself. To gain insight into the effort saved, the reader might care to compare the expressions for the $4^{\rm th}$ order and second order derivatives of the generalized extreme value distribution, for example. 

Finally, an interesting question raised by the work here, is whether it is possible to reduce the implementation cost even further by replacing the Hessian of the log-likelihood in the update by a Quasi-Newton approximation, thereby allowing coefficients to be estimated by Quasi-Newton methods, and only requiring first derivatives of the log-likelihood. 

\subsection*{Acknowledgments} 

We thank Yousra El Bachir for useful comments on an earlier version of this paper. This work was funded by EPSRC grant EP/K005251/1 `Sparse, rank-reduced and general smooth modelling'. The mackerel data are available from ICES Atlantic Anguilla surveys,\\ \verb+http://eggsandlarva.ices.dk+.

\bibliography{/home/sw283/bibliography/journal,/home/sw283/bibliography/simon}
\bibliographystyle{chicago}

\end{document}